\tikzset{every picture/.style={thick},every node/.style={draw,circle}}
\newcommand{\keywords}[1]{\par\addvspace\baselineskip
\noindent\keywordname\enspace\ignorespaces#1}
\DeclareMathOperator\enc{enc}
\begin{document}

\mainmatter

\title{On NP-Hardness of the Paired de Bruijn Sound Cycle Problem}

\author{Evgeny Kapun \and Fedor Tsarev}

\institute{St. Petersburg National Research University of Information\\
Technologies, Mechanics and Optics\\
Genome Assembly Algorithms Laboratory\\
197101, Kronverksky pr., 49, St. Petersburg, Russia\\
\url{tsarev@rain.ifmo.ru}\\
\url{http://genome.ifmo.ru/}}

\maketitle

\begin{abstract}

The paired de Bruijn graph is an extension of de Bruijn graph incorporating mate pair information for genome assembly proposed by Mevdedev et al. However, unlike in an ordinary de Bruijn graph, not every path or cycle in a paired de Bruijn graph will spell a string, because there is an additional soundness constraint on the path. In this paper we show that the problem of checking if there is a sound cycle in a paired de Bruijn graph is NP-hard in general case. We also explore some of its special cases, as well as a modified version where the cycle must also pass through every edge.

\keywords{paired de Bruijn graph, genome assembly, complexity, NP-hard}
\end{abstract}

\section{Introduction}

Current genome sequencing technologies rely on the shotgun method --- the genome is split into several small fragments which are read directly. Some of the technologies generate single reads, while others generate mate-pair reads --- genome fragments are read from both sides. The problem of reconstructing the initial genome from these small fragments (reads) is known as the genome assembly problem. It is one of the fundamental problems of bioinformatics. Several models for genome assembly were studied by researchers.

One of the models for the single reads case is based on the maximum parsimony principle --- the original genome should be the shortest string containing all reads as substrings. This leads to the Shortest Common Superstring (SCS) problem which is NP-hard \cite{scs}. In the de Bruijn graph model proposed in \cite{dbg} each read is represented by a walk in the graph. Any walk containing all the reads as subwalks represents a valid assembly. Consequently, the genome assembly problem is formulated as finding the shortest superwalk. This problem, known as Shortest De Bruijn Superwalk problem (SDBS), was shown to be NP-hard \cite{wabi}.

In \cite{mlga} an algorithm for reads' copy counts estimation based on maximum likelihood principle was proposed. A similar algorithm can be applied to find multiplicities of the de Bruijn graph edges, so, the De Bruijn Superwalk with Multiplicities problem (DBSM) can be formulated. This problem have been proven to be NP-hard as well \cite{kapun-tsarev}.

Paired-end reads case is much less studied. To the best of our knowledge the only model which deals with paired-end reads is the paired de Bruijn graph proposed in \cite{paired-dbg}. However, not every path or cycle in a paired de Bruijn graph corresponds to a correct genome assembly, because there is an additional soundness constraint on the walk. Computational complexity for the problem of finding a sound cycle in the paired de Bruijn graph remained unknown \cite{pham-recomb-ab}. In this paper we show that this problem is NP-hard.

\section{Definitions}

A \emph{de Bruijn graph} of order \(k\) over an alphabet \(\Sigma\) is a directed graph in which every vertex has an associated label (a string over \(\Sigma\)) of length \(k\) and every edge has an associated label of length \(k+1\). All labels within a graph must be distinct. If an edge \((u,v)\) has an associated label \(l\), then the label associated with \(u\) must be a prefix of \(l\) and the label associated with \(v\) must be a suffix of \(l\).

Every path in a de Bruijn graph spells a string. A string spelled by a path \(v_1\), \(e_1\), \(v_2\), \ldots, \(e_{n-1}\), \(v_n\) of length \(n\) is a unique string \(s\) of length \(n+k-1\) such that the label associated with \(v_i\) occurs in \(s\) at position \(i\) for all \(1\le i\le n\), and the label associated with \(e_i\) occurs in \(s\) at position \(i\) for all \(1\le i\le n-1\). Every cycle of length \(n\) in a de Bruijn graph spells a cyclic string of length \(n\) having the same properties.

In a \emph{paired de Bruijn graph} each vertex and each edge has an associated \emph{bilabel} instead of a label. A bilabel is an ordered pair of strings of the same length (equal to the order of the graph), denoted as \((a,b)\). We say that \((a_1,b_1)\) is a prefix of \((a_2,b_2)\) iff \(a_1\) is a prefix of \(a_2\) and \(b_1\) is a prefix of \(b_2\). Suffix is defined analogously. As in ordinary de Bruijn graphs, all bilabels must be distinct, however, individual labels of which bilabels consist may coincide.

Similarly to the ordinary de Bruijn graph, every path in a paired de Bruijn graph spells a pair of strings, and every cycle spells a pair of cyclic strings. We say that a pair of strings \((s_1s_2\ldots s_n,t_1t_2\ldots t_n)\) of length \(n\) \emph{matches with shift \(d\)} iff \(s_{i+d}=t_i\) for all \(1\le i\le n-d\). Analogously, a pair of cyclic strings \((s_1s_2\ldots s_n,t_1t_2\ldots t_n)\) matches with shift \(d\) iff \(s_{i+d}=t_i\) for all \(1\le i\le n-d\) and \(s_i=t_{i+n-d}\) for all \(1\le i\le d\).

We say that a path in a paired de Bruijn graph is \emph{sound with respect to shift \(d\)}, or just \emph{sound}, iff the pair of strings it spells matches with shift \(d\). We say that a cycle in a paired de Bruijn graph is sound iff the pair of cyclic strings matches with shift \(d\).

We say that a path or a cycle is \emph{covering} if it includes all the edges in a graph. We say that a set of paths or cycles covers the graph iff every edge of the graph belongs to at least one path or cycle in the set.

A \emph{promise problem} is a kind of decision problem where only inputs from some set of valid inputs are considered. Specifically, a promise problem is defined by a pair of disjoint sets \((S_+,S_-)\). A solution to the problem is a program which outputs ``yes'' when run on inputs in \(S_+\) and outputs ``no'' when run on inputs in \(S_-\). However, when run on inputs outside of \(S_+\cup S_-\), its behavior may be arbitrary: it may return any result, exceed its allowed time and memory bounds, or even hang.

Note that a promise problem \((S_+,S_-)\) is at most as hard as \((S_+',S_-')\) if \(S_+\subseteq S_+'\) and \(S_-\subseteq S_-'\), because the solution for the latter problem would solve the former problem as well. Particularly, \((S_+',S_-')\) is NP-hard if \((S_+,S_-)\) is NP-hard. Also, an ordinary decision problem defined by set \(S\) is the same as the promise problem \((S,\complement S)\) (here, \(\complement\) means set complement).

In the following problems, it would be assumed that the input consists of \(\Sigma\), an alphabet, \(G\), a paired de Bruijn graph of order \(k\) over \(\Sigma\), as well as \(1^d\), that is unary coding of \(d\).

\section{Trivial cases}

If \(|\Sigma|=1\), a paired de Bruijn graph can have at most one vertex and at most one edge, and every cycle is sound. If \(k=0\), a paired de Bruijn graph can have at most one vertex and at most \(|\Sigma|^2\) edges, and the problem is a bit harder. However, it can be solved in polymonial time in the following way: construct a directed graph with one vertex for each element of \(\Sigma\) and edge \((u,v)\) iff there is an edge labeled with \((u,v)\) in the original graph (this new graph may contain loops). Now, there is a sound cycle in the original graph iff there is a cycle in the new graph, and there is a covering sound cycle in the original graph iff there is a set of at most \(d\) cycles covering the new graph. Both properties can be easily checked in polymonial time.

\section{A case with fixed \(k\)}

\begin{theorem}\label{fixed-k}
For any fixed \(k\ge1\), the promise problem \((S_+,S_-)\), where \(S_+\) is the set of paired de Bruijn graphs which have a covering sound cycle and \(S_-\) is the set of paired de Bruijn graphs which do not have a sound cycle, is NP-hard.
\end{theorem}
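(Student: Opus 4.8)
The plan is to reduce from \textsc{3-SAT}. Given a formula \(\phi\) over variables \(x_1,\ldots,x_n\) with clauses \(C_1,\ldots,C_m\), I would build a paired de Bruijn graph \(G_\phi\) of the given fixed order \(k\) together with a shift \(d\), both of size polynomial in \(|\phi|\), so that \(\phi\) is satisfiable iff \(G_\phi\) has a \emph{covering} sound cycle, while if \(\phi\) is unsatisfiable then \(G_\phi\) has \emph{no} sound cycle at all. These two implications place satisfiable instances into \(S_+\) and unsatisfiable ones into \(S_-\), which is exactly what the promise problem requires, and since the region between the two sets is unconstrained, the reduction need not control graphs that have a sound but non-covering cycle.

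The core device is to use the shift to transmit each guessed truth value from where it is chosen to where it is tested. I would lay out a single long cyclic ``tape'', whose spelled first-coordinate string \(S\) records an assignment: at designated variable slots the cycle makes a binary branch (two outgoing edges whose bilabels differ), writing a \(0\) or \(1\) into \(S\). Soundness forces the second-coordinate string to be \(S\) shifted by \(d\), so the value written at a variable slot reappears, in the second coordinate, exactly \(d\) positions later, where I place the clause-checking bilabels that test whether the literal occurrences satisfy their clauses. The branch structure is encoded into the vertex labels (together with a forced, monotonically advancing ``layer index'' carried in the labels) so that the underlying directed graph is essentially one long necklace with a few branch points; this guarantees that \emph{every} cycle, sound or not, must run the full length of the tape, ruling out degenerate short cycles.

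Two points will require the most care. First, a single shift couples position \(i\) only with position \(i+d\), yet a variable typically occurs in several clauses; I would therefore route the tape so that each variable--occurrence pair is separated by the \emph{same} distance \(d\) --- for instance by replicating the assignment block at a period dividing \(d\), or by interleaving write-slots and check-slots so that every check sits exactly \(d\) steps after the corresponding write. Making all checks land at the single available distance \(d\) is the principal technical obstacle. Second, I must verify the soundness gap in both directions: in the satisfiable case, a consistent assignment lets one assemble a single closed walk that uses every edge (the gadgets being designed so all their edges are traversable under a satisfying choice), giving a covering sound cycle; in the unsatisfiable case, the necklace structure forces any sound cycle to fix a full assignment and pass every clause check, which is impossible, so no sound cycle exists. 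Finally, the construction is described for \(k=1\) and lifts to any fixed \(k\ge1\) by padding the labels with fixed context, since a larger order only adds local overlap that the bilabels can carry verbatim.
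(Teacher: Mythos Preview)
Your plan takes a genuinely different route from the paper's: they reduce from a Hamiltonian promise problem (graphs with a Hamiltonian cycle versus graphs with no Hamiltonian path), not from \textsc{3-SAT}, and their paired de Bruijn graph is built from $2n+2$ blocks, each a copy (or two copies) of the input graph, so that any sound cycle is forced to trace a Hamiltonian path through every block.

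The most serious gap in your sketch is the tension between \emph{covering} and \emph{branching}. You put a binary branch at each variable slot, so the walk takes either the $0$-edge or the $1$-edge; to be covering it must eventually traverse both. But you also impose a monotone layer index forcing any cycle to run the full length of the tape --- hence each variable slot is visited once per pass, and only one of its two edges is taken per pass. Covering therefore requires several passes, and on different passes different truth values get written into the first-coordinate string; yet soundness is a single global constraint on the entire cyclic string, so the clause checks sitting $d$ steps after a pass where $x_i=0$ was written will fail whenever they demand $x_i=1$. Your sentence ``all their edges are traversable under a satisfying choice'' is exactly where this breaks: a single satisfying assignment cannot cover both branches of a variable gadget. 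The paper confronts precisely this issue and resolves it with a dedicated wildcard character $u$: the last block's bilabels carry $u$ in the second component and the first block's carry $u$ in the first component, so soundness is trivially satisfied at the seam between passes, decoupling them. Without such a decoupling mechanism your multi-pass covering walk cannot be sound, and your proposal does not supply one.

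Two further points you raise but do not settle: aligning every (variable, clause-occurrence) pair at the \emph{same} distance $d$ is genuinely delicate --- ``replicating the assignment block at a period dividing $d$'' is a hint, not a construction, and it interacts badly with the single fixed shift --- and implementing a disjunctive clause test through the purely conjunctive constraint $s_{i+d}=t_i$ needs an explicit gadget (presumably a branch at the clause that selects a witnessing literal), which you do not describe. The paper sidesteps both difficulties by reducing from Hamiltonicity: the information to transmit is a permutation that is simply repeated verbatim from block to block, and there are no disjunctions to encode.
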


\begin{proof}
The proof of this theorem consists of two parts. Firstly, NP-hardness of a specific graph theory problem is proven by reduction from Hamiltonian Cycle problem. Then, the intermediate problem is reduced to the problem formulated in the theorem.

\begin{lemma}\label{hamiltonian}
The promise problem \((S_+,S_-)\), where \(S_+\) is the set of undirected graphs with a hamiltonian cycle and \(S_-\) is the set of undirected graphs without hamiltonian paths, is NP-hard.
\end{lemma}

\begin{proof}
First note that the problem is well-defined, because every graph with a hamiltonian cycle has a hamiltonian path. We will start with an instance \(G\) of Hamiltonian Cycle problem, which is NP-hard \cite{reducibility}. Without loss of generality, let us assume that \(G\) has at least three vertices.

Now build a new graph \(G'\) in the following way: firstly, pick a vertex in \(G\) and duplicate it together with all the edges incident to it. Let the copies of the vertex be \(a_1\) and \(b_1\). Now let us duplicate the whole graph, let the first copy be \(G_1\) and the second copy be \(G_2\), and let the copies of \(a_1\) and \(b_1\) be \(a_3\) and \(b_3\). Add two new vertices \(a_2\) and \(b_2\) and four new edges \(\{a_1,a_2\}\), \(\{a_2,a_3\}\), \(\{b_1,b_2\}\), and \(\{b_2,b_3\}\) (see \autoref{graph}).

\begin{figure}\centering
\begin{tikzpicture}
	\draw {[minimum size=1.75in]
	      (-1.25in, 0in) node {\(G_1\)}
	      ( 1.25in, 0in) node {\(G_2\)}}
	      {[minimum size=0.5in]
	      (-1in,  0.5in) node (a1) {\(a_1\)}
	      ( 0in,    1in) node (a2) {\(a_2\)}
	      ( 1in,  0.5in) node (a3) {\(a_3\)}
	      (-1in, -0.5in) node (b1) {\(b_1\)}
	      ( 0in,   -1in) node (b2) {\(b_2\)}
	      ( 1in, -0.5in) node (b3) {\(b_3\)}}
	      (a1) -- (a2) -- (a3)
	      (b1) -- (b2) -- (b3);
\end{tikzpicture}
\caption{Graph \(G'\)}\label{graph}
\end{figure}

The following two theorems show that the transformation described above maps all positive instances of hamiltonian cycle problem to \(S_+\) and all negative instances of hamiltonian cycle problem to \(S_-\).

\begin{theorem}
If a graph \(G\) has a hamiltonian cycle, then the graph \(G'\) produced as described above has a hamiltonian cycle.
\end{theorem}

\begin{proof}
After the vertex in \(G\) is duplicated, the cycle in \(G\) maps to a hamiltonian path in \(G_1\) from \(a_1\) to \(b_1\). Analogously, \(G_2\) has a hamiltonian path from \(a_3\) to \(b_3\). So, the cycle in \(G'\) is constructed as follows: start at \(a_1\), traverse the path in \(G_1\) to \(b_1\), go to \(b_2\), then to \(b_3\), then traverse the path in \(G_2\) to \(a_3\), then go to \(a_2\), and return to \(a_1\).
\end{proof}

\begin{theorem}
If a graph \(G\) does not have hamiltonian cycles, then the graph \(G'\) does not have hamiltonian paths.
\end{theorem}

\begin{proof}
Suppose that, on the contrary, \(G'\) contains a hamiltonian path. First consider the case when one end of the path is in \(G_1\) and the other end is in \(G_2\). Then, the path either traverses edges \(\{a_1,a_2\}\) and \(\{a_2,a_3\}\) but not \(\{b_1,b_2\}\) and \(\{b_2,b_3\}\), or \(\{b_1,b_2\}\) and \(\{b_2,b_3\}\) but not \(\{a_1,a_2\}\) and \(\{a_2,a_3\}\). In both cases, either \(a_2\) or \(b_2\) is not visited, so the path is not hamiltonian. So, ends of the path are either both outside \(G_1\), or both outside \(G_2\). Let us assume they are outside \(G_1\), the other case is proved analogously. Besides \(a_1\) and \(b_1\), \(G_1\) contains at least one internal vertex because of the assumption that \(G\) has at least three vertices. To reach that vertex, the path must enter \(G_1\) through \(a_1\) and leave through \(b_1\) (or the opposite, which doesn't matter). Because there are no other ways to enter \(G_1\), the path enters \(G_1\) only once and traverses all vertices of \(G_1\). So, the fragment of the path within \(G_1\), when mapped back to \(G\), becomes a hamiltonian cycle. So, \(G\) has a hamiltonian cycle, a contradiction.
\end{proof}
\end{proof}

\begin{lemma}\label{properties}
If a graph has a hamiltonian cycle, then:
\begin{itemize}
\item For each vertex \(v\) in the graph, there is a hamiltonian path having \(v\) as one of its endpoints.
\item For each edge \(\{u,v\}\) in the graph, there is a hamiltonian path passing through \(\{u,v\}\).
\item For each edge \(\{u,v\}\) and vertex \(w\ne u,v\), there is a hamiltonian path passing through \(\{u,v\}\) such that \(v\) resides between \(u\) and \(w\) on the path.
\end{itemize}
\end{lemma}

\begin{proof}
Let \(n\) be the number of vertices in the graph, and let \(v_1\), \(v_2\), \ldots, \(v_n\) be the vertices numbered in the order of the cycle. Let \(u=v_i\) and \(v=v_j\), \(i<j\) (otherwise, vertices can be renumbered in the reverse order), and let \(w=v_k\). Then, the first point of the theorem is obvious, the path for the second point is \(v_{j+1}\), \(v_{j+2}\), \ldots, \(v_n\), \(v_1\), \(v_2\), \ldots, \(v_i\), \(v_j\), \(v_{j-1}\), \ldots, \(v_{i+1}\), and the path for the third point is the same if \(i<k<j\) and \(v_{j-1}\), \(v_{j-2}\), \ldots, \(v_i\), \(v_j\), \(v_{j+1}\), \ldots, \(v_n\), \(v_1\), \(v_2\), \ldots, \(v_{i-1}\) otherwise.
\end{proof}

Now return to \autoref{fixed-k}. First consider the case \(k=1\). Begin with an instance \(G\) of the problem from \autoref{hamiltonian}. Let \(G\) have \(n\) vertices \(v_1\), \(v_2\), \ldots, \(v_n\). Without loss of generality, let us assume that \(n\ge3\). Set \(d=n+1\). Now, we are going to construct a paired de Bruijn graph \(G'=(V,A)\). It would have block structure: there will be \(2n+2\) blocks \(V_1\), \(V_2\), \ldots, \(V_{2n+2}\) and \(2n+2\) separator vertices \(s_1\), \(s_2\), \ldots, \(s_{2n+2}\), so \(V=V_1\cup V_2\cup\cdots\cup V_{2n+2}\cup\{s_1,s_2,\ldots,s_{2n+2}\}\) (see \autoref{block}). This graph will contain edges of three kinds:

\begin{itemize}
\item Within a block.
\item From \(s_i\) to an element of \(V_i\).
\item From an element of \(V_i\) to \(s_{i+1}\), or from an element of \(V_{2n+2}\) to \(s_1\).
\end{itemize}

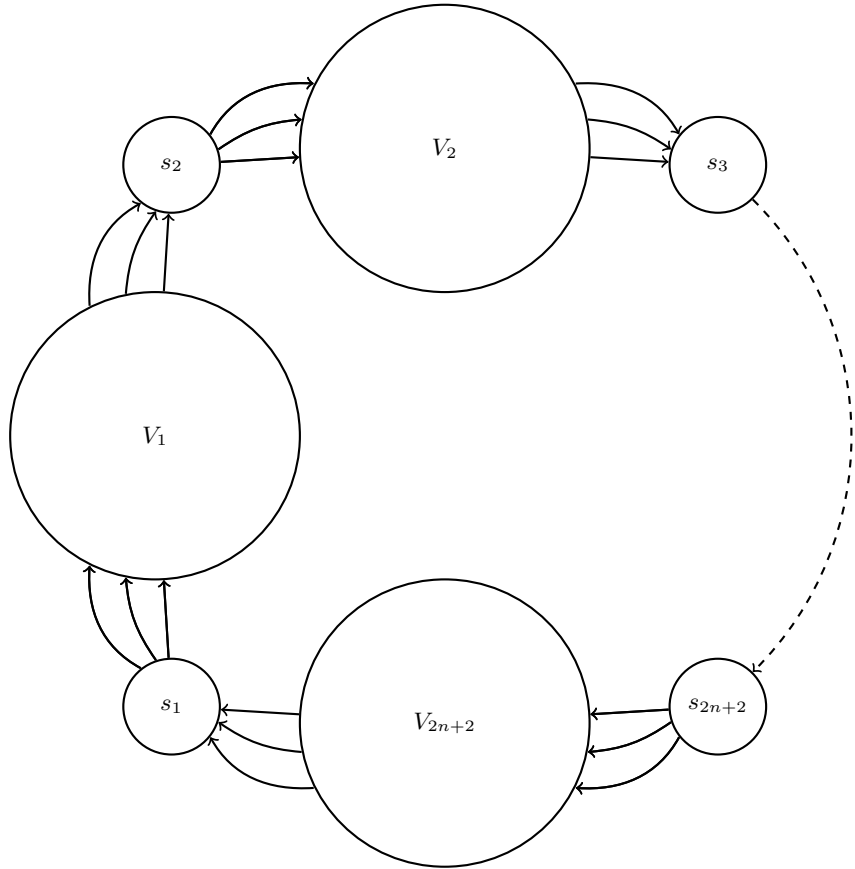
\begin{figure}\centering
\begin{tikzpicture}
	\draw[v/.style={minimum size=1.5in,on chain=placed {at=(-45*\tikzchaincount:1.5in),
	      join=by {out=0,in=180},join=by {out=15,in=165},join=by {out=35,in=150}}},
	      s/.style={minimum size=0.5in,on chain=placed {at=(-45*\tikzchaincount:2in)},
	      join=by {out=0,in=180},join=by {out=15,in=165},join=by {out=30,in=145}},
	      every join/.style={->,relative},start chain]
	      node[s] (sl) {\(s_{2n+2}\)} node[v] (vl) {\(V_{2n+2}\)}
	      node[s] (s1) {\(s_1\)} node[v] (v1) {\(V_1\)}
	      node[s] (s2) {\(s_2\)} node[v] (v2) {\(V_2\)}
	      node[s] (s3) {\(s_3\)} edge[->,dashed,bend left=45] (sl);
\end{tikzpicture}
\caption{Structure of the paired de Bruijn graph}\label{block}
\end{figure}

The alphabet would be analogously divided into \(2n+2\) blocks \(C_1\), \(C_2\), \ldots, \(C_{2n+2}\) and \(2n+2\) separator characters \(t_1\), \(t_2\), \ldots, \(t_{2n+2}\), so \(\Sigma=C_1\cup C_2\cup\cdots\cup C_{2n+2}\cup\{t_1,t_2,\ldots,t_{2n+2}\}\). For each vertex \(v\in V_i\), the first component of the associated bilabel will be in \(C_i\), and the second component will be in \(C_{i+1}\) (or \(C_1\) if \(i=2n+2\)). Each \(s_i\) would be associated with a bilabel \((t_i,t_{i+1})\), \(s_{2n+2}\) will be associated with a bilabel \((t_{2n+2},t_1)\).

The blocks will be formed as follows: the blocks \(V_1\) and \(V_{2n+2}\) would be copies of \(G\), while blocks \(V_2\) through \(V_{2n+1}\) would each contain two copies of \(G\), except for one vertex of which only one copy would be present. The vertices from the first copy would be called \(v_{i,j}\), the vertices from the second copy would be called \(v_{i,j}''\), and the only copy of \(v_{\lfloor i/2\rfloor}\) in block \(i\) would be called \(v_{i,\lfloor i/2\rfloor}'\). The edges would be added such that every path through such block would pass through this vertex.

By assigning a dedicated subset of the alphabet to each block, we prevent vertices from different blocks from being assigned the same bilabel. In fact, it can be seen from the following definition that each vertex is assigned a distinct bilabel, so the assignment is valid. We also note that, with the exceptions of the bilabels containing \(u\), the second index of a character (\(j\) in \(c_{i,j}\)) is the same in both components of a bilabel. This means that in a sound path the sequence of second indices must repeat with a period of \(d\).

The precise definition is as follows:
\begin{itemize}
\item For \(i=1,2n+2\) block \(V_i=\{v_{i,1},v_{i,2},\ldots,v_{i,n}\}\).
\item For \(i=2\ldots2n+1\) block \(V_i=\{v_{i,1},v_{i,2},\ldots,v_{i,\lfloor i/2\rfloor-1},v_{i,\lfloor i/2\rfloor+1},v_{i,\lfloor i/2\rfloor+2},\ldots,v_{i,n},v_{i,\lfloor i/2\rfloor}',v_{i,1}'',v_{i,2}'',v_{i,\lfloor i/2\rfloor-1}'',v_{i,\lfloor i/2\rfloor+1}'',v_{i,\lfloor i/2\rfloor+2}'',\ldots,v_{i,n}''\}\).
\item Alphabet block \(C_1=\{u\}\).
\item For \(i=1\ldots n+1\) alphabet block \(C_{2i}=\{c_{2i,1},c_{2i,2},\ldots,c_{2i,n}\}\).
\item For \(i=1\ldots n\) alphabet block \(C_{2i+1}=\{c_{2i+1,1},c_{2i+1,2},\ldots,c_{2i+1,i-1},c_{2i+1,i+1},c_{2i+1,i+2},\ldots,c_{2i+1,n},c_{2i+1,i}',c_{2i+1,1}'',c_{2i+1,2}'',\ldots,c_{2i+1,i-1}'',c_{2i+1,i+1}'',c_{2i+1,i+2}'',\ldots,c_{2i+1,n}''\}\).
\item For \(i=1\ldots n\) the bilabel associated with \(v_{1,i}\) is \((u,c_{2,i})\).
\item For \(i=1\ldots n\) the bilabel associated with \(v_{2n+2,i}\) is \((c_{2n+2,i},u)\).
\item For \(i=2\ldots2n+1\), \(j=1\ldots n\), \(j\ne\lfloor i/2\rfloor\) the bilabel associated with \(v_{i,j}\) is \((c_{i,j},c_{i+1,j})\).
\item For \(i=1\ldots n\) the bilabel associated with \(v_{2i,i}'\) is \((c_{2i,i},c_{2i+1,i}')\).
\item For \(i=1\ldots n\) the bilabel associated with \(v_{2i+1,i}'\) is \((c_{2i+1,i}',c_{2i+2,i})\).
\item For \(i=1\ldots n\), \(j=1\ldots n\), \(j\ne i\) the bilabel associated with \(v_{2i,j}''\) is \((c_{2i,j},c_{2i+1,j}'')\).
\item For \(i=1\ldots n\), \(j=1\ldots n\), \(j\ne i\) the bilabel associated with \(v_{2i+1,j}''\) is \((c_{2i+1,j}'',c_{2i+2,j})\).
\end{itemize}

The edges are added:
\begin{itemize}
\item For \(i=1\ldots n\) the edges \((s_1,v_{1,i})\), \((v_{1,i},s_2)\), \((s_{2n+2},v_{2n+2,i})\), and \((v_{2n+2,i},s_1)\).
\item For \(i=2\ldots 2n+1\), \(j=1\ldots n\), \(j\ne\lfloor i/2\rfloor\) the edges \((s_i,v_{i,j})\) and \((v_{i,j}'',s_{i+1})\).
\item For \(i=2\ldots 2n+1\) the edges \((s_i,v_{i,\lfloor i/2\rfloor}')\) and \((v_{i,\lfloor i/2\rfloor}',s_{i+1})\).
\end{itemize}
Also, for each edge \(\{v_i,v_j\}\) in \(G\) (\(1\le i\le n\), \(1\le j\le n\), \(i\ne j\)) the following edges are added:
\begin{itemize}
\item The edges \((v_{1,i},v_{1,j})\), \((v_{2n+2,i},v_{2n+2,j})\), \((v_{2j,i},v_{2j,j}')\), \((v_{2j+1,i},v_{2j+1,j}')\), \((v_{2i,i}',v_{2i,j}'')\), and \((v_{2i+1,i}',v_{2i+1,j}'')\).
\item For \(r=2\ldots2n+1\), \(i\ne\lfloor r/2\rfloor\), \(j\ne\lfloor r/2\rfloor\) the edges \((v_{r,i},v_{r,j})\) and \((v_{r,i}'',v_{r,j}'')\).
\end{itemize}
Note that, as edges of \(G\) are undirected, each edge should be processed twice, once as \(\{v_i,v_j\}\) and once as \(\{v_j,v_i\}\). The bilabels associated with the edges can be unambiguously determined from the bilabels associated with their ends.

The size of \(G'\) and the parameter \(d\) is polymonial in terms of \(n\) by construction. The following two theorems show that the transformation described above maps all positive instances of the problem formulated in \autoref{hamiltonian} to paired de Bruijn graphs with covering sound cycles and all negative instances of the problem formulated in \autoref{hamiltonian} to paired de Bruijn graphs without sound cycles.

\begin{theorem}\label{fixed-k-fwd}
If a graph \(G\) has a hamiltonian cycle, then the paired de Bruijn graph \(G'\) produced as described above has a covering sound cycle.
\end{theorem}

\begin{proof}
Remember that \(d=n+1\). Construct the cycle as follows: first, select a hamiltonian path in \(G\), let it be \(v_{p_1}\), \(v_{p_2}\), \ldots, \(v_{p_n}\). Start at \(s_1\), then go to \(v_{1,p_1}\), \(v_{1,p_2}\), \ldots, \(v_{1,p_n}\). Then, for each \(i\) from \(2\) to \(2n+1\), visit \(s_i\), then \(v_{i,p_1}\), \(v_{i,p_2}\), \ldots, \(v_{i,p_{r_{\lfloor i/2\rfloor}-1}}\), where \(r_{\lfloor i/2\rfloor}\) is such that \(p_{r_{\lfloor i/2\rfloor}}=\lfloor i/2\rfloor\), then \(v_{i,p_{r_{\lfloor i/2\rfloor}}}'=v_{i,\lfloor i/2\rfloor}'\), then \(v_{i,p_{r_{\lfloor i/2\rfloor}+1}}''\), \(v_{i,p_{r_{\lfloor i/2\rfloor}+2}}''\), \ldots, \(v_{i,p_n}''\). After that, visit \(s_{2n+2}\), \(v_{2n+2,p_1}\), \(v_{2n+2,p_2}\), \ldots, \(v_{2n+2,p_n}\), and finally return to \(s_1\).

This cycle visits each block, and the sequence of second indices within each block is the same (it is \(p_1\), \(p_2\), \ldots, \(p_n\)), therefore, from the construction, the cycle is sound. However, it is not necessarily covering. To make a covering cycle, first use the procedure described above to construct one cycle per every property from \autoref{properties}, namely:
\begin{itemize}
\item For every vertex \(v_i\), use the path having \(v_i\) as an endpoint to construct cycles passing through \((s_j,v_{j,i})\) (\(1\le j\le 2n+2\), \(i\ne\lfloor j/2\rfloor\)), \((s_{2i},v_{2i,i}')\), \((s_{2i+1},v_{2i+1,i}')\), \((v_{1,i},s_2)\), \((v_{2n+2,i},s_1)\), \((v_{2i,i}',s_{2i+1})\), \((v_{2i+1,i}',s_{2i+2})\), and \((v_{j,i}'',s_{j+1})\) (\(2\le j\le 2n+1\), \(i\ne\lfloor j/2\rfloor\)).
\item For every edge \(\{v_i,v_j\}\), use the path passing through \(\{v_i,v_j\}\) to construct cycles passing through \((v_{r,i},v_{r,j})\) (\(r=1,2n+2\)), \((v_{2j,i},v_{2j,j}')\), \((v_{2j+1,i},v_{2j+1,j}')\), \((v_{2i,i}',v_{2i,j}'')\), and \((v_{2i+1,i}',v_{2i+1,j}'')\).
\item For every edge \(\{v_i,v_j\}\) and vertex \(v_k\) (\(k\ne i,j\)), use the path passing through \(\{v_i,v_j\}\), such that \(v_j\) resides between \(v_i\) and \(v_k\) on the path, to construct cycles passing through \((v_{2k,i},v_{2k,j})\), \((v_{2k+1,i},v_{2k+1,j})\), \((v_{2k,j}'',v_{2k,i}'')\), and \((v_{2k+1,j}'',v_{2k+1,i}'')\).
\end{itemize}
Together, these cycles should cover all the edges of \(G'\). To make a single covering cycle, cut all these cycles at \(s_1\) and join them together. The resulting cycle is sound because the second component of every bilabel from \(V_{2n+2}\) is \(u\), and the first component of every bilabel from \(V_1\) is also \(u\), so they always match.
\end{proof}

\begin{theorem}
If a graph \(G\) doesn't have hamiltonian paths, then the paired de Bruijn graph \(G'\) produced as described above doesn't have sound cycles.
\end{theorem}

\begin{proof}
Within each block, the set of characters used for the first component of bilabels and the set of characters used for the second component of the bilabel do not intersect. Therefore, every contiguous segment of a sound cycle within a single block must have length at most \(d\). Because the blocks are connected in a circle (see \autoref{block}), and the cycle cannot be contained within a single block, it must pass around the circle at least once. Therefore, it must pass through \(s_1\). Exactly \(d\) vertices later, it must pass through \(s_2\), as it is the only vertex with a matching bilabel. The \(d-1=n\) vertices between \(s_1\) and \(s_2\) must be spent within \(V_1\), as the only other way to get to \(s_2\) is to pass around the whole circle at least once, and the circle is longer than \(d\), so this is impossible. Then it must pass through \(s_3\), \(V_3\), \(s_4\), \(V_4\), \ldots, \(s_{2n+2}\), \(V_{2n+2}\), then return to \(s_1\).

Let us call a segment between successive visits to \(s_1\) a pass. Within a pass, each block is visited exactly once, and a path within each block has length \(n\). Moreover, every pair of consecutive blocks, except \((V_{2n+2},V_1)\), has their vertices labeled such that the sequences of second indices within each block must be the same. However, for each \(i\), such that \(1\le i\le n\), the structure of blocks \(V_{2i}\) and \(V_{2i+1}\) requires the sequence of second indices to include \(i\), as it it impossible to pass though these blocks otherwise. Therefore, the sequence must include every value from \(1\) to \(n\), so it is a permutation. Since every edge in \(G'\) within a block corresponde to an edge in \(G\), the permutation defines a hamiltonian path in \(G\), a contradiction.

The case \(k>1\) is handled as follows: first, produce a graph \(G\) over an alphabet \(\Sigma\) for the case \(k=1\). Then, construct a new alphabet \(\Sigma'\) as being equal to \(\Sigma\cup\{f\}\), where \(f\) is a new character. After that, construct a new graph \(G'\) from \(G\) by replacing each vertex labeled \((a,b)\) with \(k'\) vertices labeled \((f^{k'-1}a,f^{k'-1}b)\), \((f^{k'-2}af,f^{k'-2}bf)\), \ldots, \((af^{k'-1},bf^{k'-1})\) and \(k'-1\) edges labeled \((f^{k'-1}af,f^{k'-1}bf)\), \((f^{k'-2}aff,f^{k'-2}bff)\), \ldots, \((faf^{k'-1},fbf^{k'-1})\), and replacing each edge labeled with \((ab,cd)\) with an edge labeled \((af^{k'-1}b,cf^{k'-1}d)\). Finally, set \(d'\) equal \(k'd\). Now, every sound cycle in can be unambiguously mapped from \(G\) to \(G'\) and vice versa. Therefore, the new solution is equivalent to the old one.
\end{proof}
\end{proof}

These immediately follow from \autoref{fixed-k}:

\begin{corollary}
The problem of checking whether a paired de Bruijn graph contains a sound cycle is NP-hard, both in general case and for any fixed \(k\ge1\).
\end{corollary}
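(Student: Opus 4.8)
The plan is to deduce both halves of the corollary directly from \autoref{fixed-k} by invoking the monotonicity principle for promise problems recorded in the Definitions section: if $S_+\subseteq S_+'$ and $S_-\subseteq S_-'$, then $(S_+',S_-')$ is NP-hard whenever $(S_+,S_-)$ is. The plain sound-cycle question is, as an ordinary decision problem, the set $S$ of paired de Bruijn graphs that possess a sound cycle, and by the remark in the Definitions this is identical to the promise problem $(S,\complement S)$. So the entire argument reduces to exhibiting the two sets of \autoref{fixed-k} as sitting below $(S,\complement S)$.

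First I would fix an arbitrary $k\ge1$ and check the two containments. On the positive side, a covering sound cycle is in particular a sound cycle, so any graph in $S_+$ possesses a sound cycle; hence $S_+\subseteq S$. On the negative side, $S_-$ is by definition the set of order-$k$ graphs having no sound cycle, which is precisely $\complement S$ within the universe of order-$k$ graphs; in particular $S_-\subseteq\complement S$. Applying the monotonicity principle to the NP-hard promise problem of \autoref{fixed-k} then gives NP-hardness of $(S,\complement S)$ for every fixed $k\ge1$, which is the fixed-$k$ half of the corollary.

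For the general case, where the order $k$ is part of the input, I would observe that the family of order-$1$ instances forms a subset of all instances, so any procedure solving the general sound-cycle problem in particular solves its restriction to $k=1$. Since that restriction is already NP-hard by the previous paragraph, the general problem is NP-hard as well. (Equivalently, one applies the same monotonicity principle with the $k=1$ decision problem below the general one.)

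I expect no substantive obstacle here, as the whole derivation is bookkeeping. The single point that needs care is confirming that the negative instances $S_-$ of \autoref{fixed-k} coincide \emph{exactly} with the ``no'' instances of the plain sound-cycle problem, rather than merely being contained in some strictly larger complement; this is what guarantees that the promise gap of the theorem is fully absorbed into the ordinary decision problem and that the sandwiching $S_+\subseteq S$, $S_-\subseteq\complement S$ is valid.
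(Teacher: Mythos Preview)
Your proposal is correct and is exactly what the paper intends: the corollary is stated as following ``immediately'' from \autoref{fixed-k}, and you have simply unpacked that word using the monotonicity principle for promise problems recorded in the Definitions section. The only remark is that your final caveat is over-cautious: you do not need $S_-$ to coincide exactly with $\complement S$, only the containment $S_-\subseteq\complement S$, and that is literally the definition of $S_-$ in \autoref{fixed-k}.
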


\begin{corollary}
The problem of checking whether a paired de Bruijn graph contains a covering sound cycle is NP-hard, both in general case and for any fixed \(k\ge1\).
\end{corollary}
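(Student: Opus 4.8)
The plan is to derive the corollary directly from \autoref{fixed-k} using the monotonicity principle for promise problems recorded in the Definitions section: whenever \(S_+\subseteq S_+'\) and \(S_-\subseteq S_-'\), the problem \((S_+,S_-)\) is at most as hard as \((S_+',S_-')\), so NP-hardness propagates upward. I would also use the observation that an ordinary decision problem given by a set \(S\) is nothing but the promise problem \((S,\complement S)\). The whole argument then reduces to checking two set inclusions, so there is essentially no computation to carry out.

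For the covering sound cycle corollary I would fix \(k\ge1\) and let \(S\) be the set of paired de Bruijn graphs of order \(k\) that possess a covering sound cycle; the ordinary problem in question is exactly \((S,\complement S)\). In \autoref{fixed-k} the promise problem \((S_+,S_-)\) has \(S_+\) equal to precisely the graphs with a covering sound cycle, so \(S_+=S\subseteq S\), and \(S_-\) equal to the graphs with no sound cycle. Since a covering sound cycle is in particular a sound cycle, any graph with no sound cycle also has no covering sound cycle, giving \(S_-\subseteq\complement S\). Monotonicity then transfers the NP-hardness of \autoref{fixed-k} to \((S,\complement S)\). The general case, in which \(k\) is part of the input, follows because every fixed-\(k\) instance is also a valid general instance, so hardness for a single fixed \(k\) already forces hardness of the general problem.

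The companion sound cycle corollary would be proved in exactly the same style: letting \(T\) be the graphs possessing some sound cycle, its complement \(\complement T\) is the set of graphs with no sound cycle, which is literally \(S_-\); and since every covering sound cycle is a sound cycle we have \(S_+\subseteq T\). Thus \(S_+\subseteq T\) and \(S_-\subseteq\complement T\), and monotonicity again yields NP-hardness of \((T,\complement T)\).

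I expect no genuine obstacle here, as all the content lives in \autoref{fixed-k}. The one point demanding care is orienting the inclusions correctly: the promise problem's guaranteed-\emph{yes} set \(S_+\) must land inside the ordinary problem's yes-set, and its guaranteed-\emph{no} set \(S_-\) inside the ordinary no-set. This matters precisely because \(S_+\) and \(S_-\) do not partition all inputs — a graph possessing a sound but no covering sound cycle lies outside both — whereas \((S,\complement S)\) and \((T,\complement T)\) do partition; verifying the two inclusions is exactly what confirms that the promise gap is absorbed correctly on each side.
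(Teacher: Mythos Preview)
Your argument is correct and is essentially what the paper does: the paper simply states that the corollary ``immediately follows'' from \autoref{fixed-k}, relying implicitly on the monotonicity principle for promise problems recorded in the Definitions section, which you have spelled out explicitly and checked carefully.
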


\section{A case with fixed \(|\Sigma|\)}

\begin{theorem}\label{fixed-sigma}
For any fixed \(|\Sigma|\ge2\), the promise problem \((S_+,S_-)\), where \(S_+\) is the set of paired de Bruijn graphs which have a covering sound cycle and \(S_-\) is the set of paired de Bruijn graphs which don't have a sound cycle, is NP-hard.
\end{theorem}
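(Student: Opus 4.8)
The plan is to reduce from the fixed-order problem of \autoref{fixed-k}, taking \(k=1\), which is already NP-hard over an unbounded alphabet. It suffices to establish hardness for \(|\Sigma|=2\): a paired de Bruijn graph over \(\{0,1\}\) is also a valid instance over any larger fixed alphabet (simply leaving the extra characters unused), so hardness for the binary case immediately gives hardness for every fixed \(|\Sigma|\ge2\). The guiding idea is an alphabet reduction that trades alphabet size for graph order: each character of the large alphabet \(\Sigma\) is replaced by a fixed-length binary block, each symbol-transition of the order-\(1\) graph \(H\) is expanded into a bit-by-bit chain, and the shift is rescaled accordingly.

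Concretely, I would fix an injective encoding \(\enc:\Sigma\to\{0,1\}^m\) with \(m=\lceil\log_2|\Sigma|\rceil\) and build a binary paired de Bruijn graph \(H'\) of order \(k'\) (a small multiple of \(m\)). A vertex of \(H\) carrying bilabel \((a,b)\) becomes a vertex of \(H'\) whose bilabel spells \(\enc(a)\) in its first component and \(\enc(b)\) in its second; an edge of \(H\) that advances the first component from \(a\) to \(a'\) and the second from \(b\) to \(b'\) is realized as a chain of \(m\) new edges whose intermediate vertices slide the length-\(k'\) window through the concatenations \(\enc(a)\enc(a')\) and \(\enc(b)\enc(b')\). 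Setting the new shift to \(d'=dm\) turns a one-symbol shift in \(H\) into a one-block shift in \(H'\).

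The main obstacle is not the bookkeeping but keeping \(H'\) a faithful, valid paired de Bruijn graph. First, all bilabels must stay pairwise distinct. Second, and more delicate, the bit-chains simulating two different transitions must not improperly merge: in a de Bruijn graph, chains that share a window of \(k'\) consecutive characters are forced to share vertices, so a naive encoding in which two transitions agree on a prefix would splice together paths that correspond to no walk in \(H\), introducing spurious (possibly sound) cycles. I would defuse this exactly as the \(k>1\) part of \autoref{fixed-k} defuses the analogous issue with its filler character \(f\): augment the encoding with markers so that every intermediate bilabel records both the original edge it simulates and its phase along the chain. This makes each chain internally vertex-disjoint and disjoint from every other chain except at the two endpoints, which are precisely the images of genuine vertices of \(H\). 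Crucially, the markers and fillers are placed in identical positions in both components of every bilabel, so they match under any shift, while the payload bits that encode the actual characters are aligned by the block shift \(d'=dm\); this is what makes soundness transfer.

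With a non-merging construction in hand, correctness is a routine verification. Lifting: a sound cycle of \(H\) becomes a sound cycle of \(H'\) by replacing each symbol step with its \(m\)-edge chain; soundness with shift \(d\) becomes soundness with shift \(dm\) because the encoding acts block-wise and \(d'\) is an integer number of blocks, so the per-symbol identities \(a_{i+d}=b_i\) become per-bit identities after the shift. Projection: because the chains force any cycle of \(H'\) to respect block boundaries and to traverse whole chains, every sound cycle of \(H'\) collapses to a sound cycle of \(H\). Covering transfers as well, since every edge of \(H'\) lies on exactly one chain, so a covering sound cycle of \(H\) yields a covering sound cycle of \(H'\) and conversely. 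As the construction is polynomial in the size of \(H\) and \(d\), this maps the positive instances of \autoref{fixed-k} into \(S_+\) and the negative instances into \(S_-\), proving NP-hardness for \(|\Sigma|=2\) and hence for every fixed \(|\Sigma|\ge2\).
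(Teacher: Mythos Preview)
Your approach is essentially the paper's: reduce from the \(k=1\) case by binary-encoding each symbol, expand every edge of \(H\) into a chain of binary edges, and insert markers so that chains for different transitions cannot spuriously merge; the paper realizes this concretely with the run-length-limited map \(0\mapsto01\), \(1\mapsto10\) and delimiters \(01110\)/\(10001\), taking \(k'=4l+5\). One bookkeeping fix: once you augment the encoding with markers the per-symbol block is longer than \(m\), so the new shift must be \(d\) times that full block length (in the paper each original step becomes \(4l+10\) binary steps), not \(d'=dm\) as you wrote.
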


\begin{proof}
This is proven by reduction from the same problem with fixed \(k=1\). Let the instance with \(k=1\) be \(G\), and let its alphabet be \(\Sigma\). We are going to build an instance \(G'\) of the same problem with alphabet \(\Sigma'=\{0,1\}\). Set \(l=\lceil\log_2|\Sigma|\rceil\). Now, every character from \(\Sigma\) can be unambiguously encoded with \(l\) binary digits. Take that encoding, and replace each digit \(0\) with the sequence \(01\), and each digit \(1\) with the sequence \(10\). The resulting encoding of length \(2l\) has the following properties: it does not contain repetitions of three or more of the same digit as a substring, and it does not begin or end with a repeated digit. Set \(k'=4l+5\). Let \(\enc(c)\) denote the \(2l\)-character encoding of \(c\) described above. Then, for each vertex in \(G\) labeled \((a,b)\), add a vertex labeled \((\enc(a)01110\enc(a),\enc(b)01110\enc(b))\). Here, the sequence \(111\) unambiguously determines the center of the encoding of a character. Each edge from \(G\) is translated to \(4l+9\) new vertices and \(4l+10\) new edges: if the original edge has the bilabel \((ab,cd)\), the bilabels of the new vertices and edges will spell \((\enc(a)01110\enc(a)10001\enc(b)01110\enc(b),\enc(c)01110\enc(c)10001\enc(d)01110\enc(d))\). Each bilabel would include at least one of the marker sequences \(000\) and \(111\) and at least one complete encoding of a character, so there will be no undesired overlaps. It can be shown that each sound cycle from \(G\) can be mapped to \(G'\) and vice versa, so they are equivalent for the purposes of the problem.
\end{proof}

These immediately follow from \autoref{fixed-sigma}:

\begin{corollary}
The problem of checking whether a paired de Bruijn graph contains a sound cycle is NP-hard for any fixed \(|\Sigma|\ge2\).
\end{corollary}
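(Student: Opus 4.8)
The plan is to derive this corollary directly from \autoref{fixed-sigma} using the monotonicity of promise problems recorded in the Definitions section, in exactly the same way the analogous corollaries after \autoref{fixed-k} are obtained. First I would rephrase the sound-cycle decision problem as a promise problem. Let \(S\) denote the set of paired de Bruijn graphs (over a fixed alphabet with \(|\Sigma|\ge2\), together with the unary-coded parameter \(d\)) that contain a sound cycle. As an ordinary decision problem, ``does the graph contain a sound cycle?'' is precisely the promise problem \((S,\complement S)\), which is the object the corollary asserts to be NP-hard.

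Next I would line this up against the promise problem \((S_+,S_-)\) from \autoref{fixed-sigma}, where \(S_+\) is the set of graphs having a covering sound cycle and \(S_-\) is the set of graphs having no sound cycle. The two inclusions required by the monotonicity observation both hold immediately: a covering sound cycle is in particular a sound cycle, so every graph in \(S_+\) lies in \(S\), giving \(S_+\subseteq S\); and a graph with no sound cycle lies outside \(S\), so \(S_-\subseteq\complement S\) (in fact \(S_-=\complement S\)). By the observation that \((S_+',S_-')\) is NP-hard whenever \((S_+,S_-)\) is NP-hard and \(S_+\subseteq S_+'\), \(S_-\subseteq S_-'\), it then follows that \((S,\complement S)\) is NP-hard, which is the claim. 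Since \autoref{fixed-sigma} is stated for every fixed \(|\Sigma|\ge2\), the conclusion carries over to the same range of alphabet sizes with no further work.

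There is essentially no obstacle here: the entire content lies in \autoref{fixed-sigma}, and the corollary is an immediate consequence of the fact that a covering sound cycle witnesses the existence of a sound cycle, while the negative instances of the two formulations already coincide. The only point to handle carefully is to invoke the monotonicity principle through the two explicit set inclusions rather than to claim that the two problems are identical; they are not, since they differ on graphs that have a sound cycle but no covering one, and the promise formulation of \autoref{fixed-sigma} simply leaves those inputs unconstrained.
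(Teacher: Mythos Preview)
Your proposal is correct and matches the paper's approach: the paper simply states that the corollary ``immediately follows'' from \autoref{fixed-sigma}, and your argument via the monotonicity of promise problems (using $S_+\subseteq S$ and $S_-=\complement S$) is exactly the justification implicit in that remark.
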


\begin{corollary}
The problem of checking whether a paired de Bruijn graph contains a covering sound cycle is NP-hard for any fixed \(|\Sigma|\ge2\).
\end{corollary}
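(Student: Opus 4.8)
The plan is to derive this corollary directly from \autoref{fixed-sigma} together with the monotonicity principle for promise problems recorded in the Definitions section; no new reduction is needed. The key point is simply to identify the right set containments and then invoke that principle.

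First I would name the objects. Fix an alphabet with \(|\Sigma|\ge2\), and let \(S\) be the set of paired de Bruijn graphs over this alphabet that contain a covering sound cycle. The decision problem in the statement is, by the last remark of the Definitions section, exactly the promise problem \((S,\complement S)\). Let \((S_+,S_-)\) be the promise problem from \autoref{fixed-sigma}, so that \(S_+\) is the set of graphs having a covering sound cycle and \(S_-\) is the set of graphs having no sound cycle at all. The two containments to verify are \(S_+\subseteq S\) and \(S_-\subseteq\complement S\). The first is in fact an equality, \(S_+=S\), holding by definition.

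For the second containment I would use the only substantive observation in the argument: a covering sound cycle is, in particular, a sound cycle, so any graph that has no sound cycle certainly has no covering sound cycle. Hence \(S_-\cap S=\emptyset\), that is, \(S_-\subseteq\complement S\). With both \(S_+\subseteq S\) and \(S_-\subseteq\complement S\) established, the monotonicity principle states that \((S_+,S_-)\) is at most as hard as \((S,\complement S)\); since \((S_+,S_-)\) is NP-hard by \autoref{fixed-sigma}, it follows that \((S,\complement S)\) is NP-hard as well, which is precisely the claim.

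There is no real obstacle here, since both corollaries following \autoref{fixed-sigma} are immediate consequences of it. The only point meriting care is the direction of the ``sound versus covering sound'' implication in the second containment: one must use that \(S_-\) is defined by the \emph{absence of any sound cycle} (the stronger negative condition), not merely by the absence of a covering sound cycle, so that it is genuinely contained in the complement of the set \(S\) defined via covering sound cycles. The companion corollary on plain sound cycles is handled by the symmetric choice \(S=\{\text{graphs with a sound cycle}\}\), using \(S_+\subseteq S\) (a covering sound cycle is a sound cycle) and \(S_-=\complement S\cap S_-\subseteq\complement S\) in the same way.
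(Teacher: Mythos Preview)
Your proposal is correct and is exactly the argument the paper intends: the paper merely says the corollary ``immediately follows'' from \autoref{fixed-sigma}, and what you have written is precisely the unpacking of that word via the monotonicity principle for promise problems stated in the Definitions section, with the obvious containments \(S_+=S\) and \(S_-\subseteq\complement S\).
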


\section{A case with both \(k\) and \(|\Sigma|\) fixed}

If both \(k\) and \(|\Sigma|\) are fixed, the number of possible paired de Bruijn graphs is limited: there are at most \(|\Sigma|^{2k}\) different vertex bilabels, and at most \(|\Sigma|^{2k+2}\) different edge bilabels, and each bilabel is used by at most one vertex or edge, so the total number of different paired de Bruijn graphs is limited by a number which only depends on \(k\) and \(|\Sigma|\). Let us denote this number by \(N\).

There are at most \(N\) different problem instances for each instance length: otherwise, there would be two different instances having the same graph and the same length, but such instances can only differ in \(d\), which is represented in unary coding, so any instances which only differ in \(d\) must have different length. Therefore, the number of instances is polynomial in instance length, so the language defined by the problem is \emph{sparse}. Unless P=NP, a sparse language is never NP-hard \cite{sparse}. Therefore, the problem of checking whether a paired de Bruijn graph has a sound cycle cannot be NP-hard if both \(k\) and \(|\Sigma|\) are fixed.

\section{Conclusion}

We have proved that the Paired de Bruijn Sound Cycle problem is NP-hard in general case. Results of this work combined with previous works on genome assembly complexity show that all known models for genome assembly both from single and mate-pair reads are NP-hard.

However, the problem considered in this paper has a special case with both $k$ and $|\Sigma|$ fixed which is not NP-hard unless P=NP. A reasonable direction of future research is to determine if this case is solvable in polynomial time.

\bibliographystyle{splncs03}
\bibliography{\jobname}

\end{document}